\newtheorem{theorem}{Theorem}
\newenvironment{proof}[1][Proof: ]{\textbf{#1}\it }
\begin{document}
\title{On estimation of the effect lag of predictors 
and prediction in functional linear model}
\author{Haiyan Liu, Georgios Aivaliotis, Jeanine Houwing-Duistermaat
\\Department of Statistics\\University of Leeds}
\maketitle

\begin{abstract}
We propose a functional linear model to predict a response using multiple functional and 
longitudinal predictors and to estimate the effect lags of predictors. 
The coefficient functions are written as the expansion of a basis system 
(e.g. functional principal components, splines), and the coefficients of 
the fixed basis functions are estimated via optimizing a penalization criterion. 
Then time lags are determined by simultaneously searching on a prior grid mesh 
based on minimization of prediction error criterion. 
Moreover, mathematical properties of the estimated parameters and predicted responses
are studied and performance of the method is evaluated by extensive simulations.
\end{abstract}
Keywords: lag functional linear model, functional principal component analysis, 
sparse and irregular functional data.

\section{Introduction}
Temporal (time stamped) data are collected both routinely and ad hoc for various processes related to human activities and the natural world. In its two extreme forms, this data can be sampled densely and regularly in time (we call this dense data) or can include only records obtained at irregular time intervals with few measurements (we call this sparse longitudinal data). Naturally, intermediate situations are also available. Examples of dense data are hourly pollution and climate measurements in a particular site, or financial time series. Sparse datasets can arise from medical data (e.g. visits to GP) and other ad hoc observations for example measurements on wild species to which access is not easy. 

Relationships between temporal data are often not synchronous and involve a delay in the effects. For example, historical exposure to high temperatures might not have an effect on the growth of trees anymore after a certain period and it may also take some time before high temperatures result in lower growth rate.
It might take some time to have an effect on a person's health and similarly the effects might fade away after some time if the exposure to a factor seizes (e.g. stop smoking).

In this paper, we consider estimation and prediction in a functional regression model where the dense functional predictor trajectory and the sparse longitudinal predictor trajectory from certain intervals of past have effects on the sparse response trajectory. We estimate the intervals through the corresponding lags of the effect of predictors on response. In our motivating example, we estimate 
the influence of dense functional temperature on sparse longitudinal tree diameters. 
Moreover, we want to estimate the effect lags of temperature on tree diameter, i.e. from when the predictors have influence on the response and until when this influence disappears. 

The classical function-on-function linear model reads as follows :
$$Y(t)=\beta_0(t)+\int_0^T\beta_1(s,t)X(s)ds+\epsilon(t),\ t\in[0,T]$$
where $Y(t)$ is the response trajectory, $X(s)$ is the predictor trajectory, 
$\epsilon(t)$ is the error process, $\beta_0(t)$ is the intercept process, 
$\beta_1(s, t)$ is the two-dimensional regression coefficient function
which shows the influence of $X$ on $Y$. 
This model was first introduced by Ramsay and Dalzell (1991). 
For reviews of functional data analysis, see Ramsay and Silverman (2005), 
Horvath and Kokoszka(2012) and the references therein.
Notice that in this model the entire predictor trajectory $X(s)$ including the future values, 
i.e. when $s>t$, is assumed to influence the current value of response trajectory $Y$ at time $t$. 
Clearly this is not appropriate in many applications.

As a result, the historical functional linear model has been 
investigated by Malfait and Ramsay (2003), Harezlak et al. (2007), 
Kim et al. (2009, 2011)
where only the past of the predictor trajectory influences the response at the
current time:
$$Y(t)=\beta_0(t)+\int_{t-\delta_2}^{t-\delta_1}\beta_1(s,t)X(s)ds+\epsilon(s),\ t\in[0,T]$$
where $\delta_1$ and $\delta_2$ ($0<\delta_1<\delta_2<T$) are the lags for the influence 
of predictor trajectory on response trajectory. 
For one dense functional predictor, Malfait and Ramsay (2003) considers the 
triangular basis expansion of the coefficient function which is estimated at 
each observation point.
A penalized approach which allows varying lags for the historical 
functional linear model has been developed by Harezlak et al. (2007).
Kim et al. (2011) consider the situation that both predictor process
and response process are sparsely and irregularly observed.
Pomann et al. (2016) has extended the historical functional linear model to
multiple homogeneous predictors, and the response is influenced by
the predictors from a fixed starting effect time to current time.

The contribution of this paper is 
multiple heterogeneous (sparse longitudinal or dense functional) predictors 
are included, time lags (both starting and end points) that are fixed but unknown 
are determined, the asymptotic properties of the estimators have been investigated. 
To be precise, this paper addresses the historical functional linear model with multiple
heterogeneous predictors, and the response is influenced by predictors from
a fixed starting effect time to a fixed ending effect time. 
We estimate the coefficient functions, the effect lags and predict the
response. 
Moreover, the asymptotic behavior of the estimated coefficient functions,
and the predicted response curve is investigated.

The paper is organized as follows. 
In section 2, the history function-on-function linear model for multiple
heterogeneous predictors is introduced.
In section 3, we consider the estimation of the coefficient functions and
the uniform consistency of our estimators are established.
In section 4, the prediction of the response trajectories is proposed and the 
asymptotic property of the predicted trajectories is established.
The determination of the lags is proposed in section 5.
Extensive numerical examples are considered in section 6 to show
the finite properties of our proposed estimators.
In section 7, the Amazonian rainforest dataset is analysed and the lags are
determined.
We finish the paper with conclusion and discussion.

\section{Model}
Suppose our observations are $\{Y_{ij}, t_{ij}: i =1,...,n, \ j=1,...,m_{Yi} \}$,
 $\{W_{1ij}, s_{1ij}: i =1,...,n, \ j=1,...,m_{X1i} \}$
and $\{W_{2ij}, s_{2ij}: i=1,...,n, \ j=1,...,m_{X2i} \}$,
where $t_{ij},\ s_{1ij}, \ s_{2ij}\in[0, 1]$.
For example, the response $Y_{ij}$ corresponds to the tree diameter for subject $i$ at time $t_{ij}$.
The predictor $W_{1ij}$ corresponds to the temperature for subject $i$ at time $s_{1ij}$.
The predictor $W_{2ij}$ corresponds to the climatic water deficit for subject $i$ at time $s_{2ij}$.

Let $W_{1ij}=X_{1i}(s_{1ij})+\epsilon_{1ij}$, $W_{2ij}=X_{2i}(s_{2ij})+\epsilon_{2ij}$  
and $X_{1i}(t)$, $X_{2i}(t)$ are independent copies of underlying square-integrable 
random functions $X_1(t)$ and $X_2(t)$ over $[0, 1]$ respectively.
Without loss of generality, we assume $\mu_{X_1}(t)=E[X_{1}(t)]=0$ and 
$\mu_{X_2}(t)=E[X_{2}(t)]=0$.
We denote $C_{X_1}(s, t)=cov(X_1(s), X_1(t))$ the covariance of $X_1$ and
 $C_{X_2}(s, t)=cov(X_2(s), X_2(t))$ the covariance of $X_2$.
We assume that the first predictor curves $X_{1i}$ are observed on a dense 
and regular grid of points $s_{1ij}=s_{1j}$.
The observations $W_{1ij}$ are the discrete version of $X_{1i}$ with iid mean-zero 
and variance-finite noise $\epsilon_{1ij}$ which are independent of $X_{1i}$.
However, the second predictor curves $X_{2i}$ are observed on a sparse 
and irregular grid of points $s_{2ij}$.
Also observations $W_{2ij}$ are the discrete version of $X_{2i}$ with iid mean zero 
and variance-finite noise $\epsilon_{2ij}$ which are independent of $X_{2i}$.
For the responses $Y_{ij}$, they are observed on a sparse and irregular grid of points $t_{ij}$.

We define the lag historical functional linear model with two heterogeneous 
covariates $X_1$ and $X_2$ for the response $Y$ as 
\begin{align}
\label{Equation-LagFLM}
Y_{ij}= \beta_0(t_{ij})+\int_{\delta_{11}}^{\delta_{12}}\beta_1(s, t_{ij})X_{1i}(t_{ij}-s)ds
              +\int_{\delta_{21}}^{\delta_{22}}\beta_2(s, t_{ij})X_{2i}(t_{ij}-s)ds+e_{ij}
\end{align}
where $i\in\{1,...,n\}, \ j\in\{1,...,m_{Yi}\}, \ \beta_0: [0, 1]\to\mathbb R$, 
$\Delta_1=[\delta_{11}, \delta_{12}]\subset[0, 1]$,
$\Delta_2=[\delta_{21}, \delta_{22}]\subset[0, 1]$,
$\beta_1:\Delta_1\times[0, 1]\to\mathbb R$  and $\beta_2:\Delta_2\times[0, 1]\to\mathbb R$ 
are continuous two-dimensional coefficient functions, and $e_{ij}$ are independent measurement errors
with mean zero and finite variance $\sigma_e^2$.
Errors $e_{ij}$ are assumed to be independent of $X_{1i}$ and $X_{2i}$.

Notice that (\ref{Equation-LagFLM}) is equivalent to 
$$Y_{ij}=
\beta_0(t_{ij})+\int_{t_{ij}-\delta_{12}}^{t_{ij}-\delta_{11}}\beta_1(t_{ij}-s, t_{ij})X_{1i}(s)ds
               +\int_{t_{ij}-\delta_{22}}^{t_{ij}-\delta_{21}}\beta_2(t_{ij}-s, t_{ij})X_{2i}(s)ds
               +e_{ij}$$
then the model (\ref{Equation-LagFLM}) means that given 
the entire predictor curves $X_{1i}$ and $X_{2i}$,
the response for subject $i$ at time $t_{ij}$ is only affected by the values of $X_{1i}$
over time-window $[t_{ij}-\delta_{12}, t_{ij}-\delta_{11}]$ and by the values of $X_{2i}$
over time-window $[t_{ij}-\delta_{22}, t_{ij}-\delta_{21}]$.
That is, $t_{ij}-\delta_{12}$ is the starting effective time and $t_{ij}-\delta_{11}$ is
the ending effective time for $X_{1i}$ to have effect on $Y_i$ at time $t_{ij}$;
$t_{ij}-\delta_{22}$ is the starting effective time and $t_{ij}-\delta_{21}$ is
the ending effective time for $X_{2i}$ to have effect on $Y_i$ at time $t_{ij}$.
The coefficient functions $\beta_1$ and $\beta_2$, weigh the values $X_{1i}$ 
and $X_{2i}$ over the time-windows $[t_{ij}-\delta_{12}, t_{ij}-\delta_{11}]$ 
and $[t_{ij}-\delta_{22}, t_{ij}-\delta_{21}]$ respectively.
The coefficient functions $\beta_1$ and $\beta_2$ quantify the effect of $X_{1i}$ and $X_{2i}$  
respectively on the response $Y_{ij}$.

\section{Estimation}
Let $\{B_{11}(s), ..., B_{1K_1}(s)\}$ and $\{B_{21}(s), ..., B_{2K_2}\}_k$ 
be two pre-specified functional bases on 
$\Delta_1$ and $\Delta_2$. 
Then the two-dimensional coefficient functions $\beta_1(s, t)$ and $\beta_2(s, t)$ 
are assumed to be represented as
$$\beta_1(s, t)=\sum_{k=1}^{K_1}B_{1k}(s)b_{1k}(t),\ s\in\Delta_1,\ t\in[0, 1]$$ 
and
$$\beta_2(s, t)=\sum_{k=1}^{K_2}B_{2k}(s)b_{2k}(t),\ s\in\Delta_2,\ t\in[0, 1]$$
respectively, where $K_1$ and $K_2$ capture  the resolution of the fit and should be chosen accordingly
and $b_{1k}(t)$ and $b_{2k}(t)$ are the unknown time-varying coefficient functions defined on $[0, 1]$.
As Kim et al. (2011) reported where only one sparse predictor was discussed,
``the estimation is not sensitive to the choice of $K$ provided
that there are enough number of basis functions used in the estimation, since the
penalized solution (defined later in this session) prevents over-fitting''. 
Clearly, various basis functions such as Fourier, B-spline, wavelet basis can be used depending
on the specific features of the coefficient functions.
Since, we could not assume any prior on the coefficients and B-spline basis are computationally 
fast and have good properties, we use B-spline functions of degree 4 with 10 equally spaced interior 
knots over $\Delta_1$ and $\Delta_2$ (number of basis is 14).
For details on B-spline basis, see for example Fan and Gijbels (1996) and Ramsay and Silverman (2005).

Plugging $\beta_1(s, t)$ and $\beta_2(s, t)$ into equation (\ref{Equation-LagFLM}), we have
\begin{align}
\label{Equation-LagFLM-2}
Y_{ij}
= &\beta_0(t_{ij})
+\sum_{k=1}^{K_1}b_{1k}(t_{ij})\int_{\delta_{11}}^{\delta_{12}}B_{1k}(s)X_{1i}(t_{ij}-s)ds\\\nonumber
&\qquad \quad +\sum_{k=1}^{K_2}b_{2k}(t_{ij})\int_{\delta_{21}}^{\delta_{22}}B_{2k}(s)X_{2i}(t_{ij}-s)ds
+e_{ij}\\\nonumber
               =:&\beta_0(t_{ij})+\sum_{k=1}^{K_1}b_{1k}(t_{ij})\tilde{X}_{1ik}(t_{ij})
               +\sum_{k=1}^{K_2}b_{2k}(t_{ij})\tilde{X}_{2ik}(t_{ij})+e_{ij}\\\nonumber
               =:&\beta_0(t_{ij})+\mathbf b_1^T(t_{ij})\tilde{\mathbf X}_{1i}(t_{ij})
               +\mathbf b_2^T(t_{ij})\tilde{\mathbf X}_{2i}(t_{ij})+e_{ij}, \nonumber
\end{align}
where $\tilde{X}_{1ik}(t_{ij})=\int_{\delta_{11}}^{\delta_{12}}B_{1k}(s)X_{1i}(t_{ij}-s)ds$, 
$\tilde{X}_{2ik}(t_{ij})=\int_{\delta_{21}}^{\delta_{22}}B_{2k}(s)X_{2i}(t_{ij}-s)ds$,
$\mathbf b_1(t_{ij})=(b_{11}(t_{ij}),..., b_{1K_1}(t_{ij}))^T$,
$\mathbf b_2(t_{ij})=(b_{21}(t_{ij}),..., b_{2K_2}(t_{ij}))^T$,
$\tilde{\mathbf X}_{1i}(t_{ij})=(\tilde X_{1i1}(t_{ij}), ..., \tilde X_{1iK_1}(t_{ij}))^T$,
and $\tilde{\mathbf X}_{2i}(t_{ij})=(\tilde X_{2i1}(t_{ij}), ..., \tilde X_{2iK_2}(t_{ij}))^T$.
Note the observed times $t_{ij}$ depend on subject $i$.
Then model (\ref{Equation-LagFLM}) reduces to a varying coefficient model
with $K_1$ induced predictors $\tilde{X}_{1ik}(t_{ij})$ 
and $K_2$ induced predictors  $\tilde{X}_{2ik}(t_{ij})$. 

At first, notice that $\mu_{X_1}(t)=\mu_{X_2}(t)=0$ implies $\beta_0(t_{ij})=E[Y_{ij}]$, so $\beta_0$ can be estimated
by smoothing $Y_{ij}$ via local smoothing method based on the pooled data, see for example
Yao et al. (2005), Beran and Liu (2014) and Liu and Houwing-Duistermaat (2018).
We denote $Y_{ij}-\hat\beta_0(t_{ij})$ by $Y_{ij}$, where $\hat\beta_0(t_{ij})$ is an estimator of $\beta_0(t)$ evaluated at time $t_{ij}$.

In order to derive the estimator of $\{b_{11}(t),...,b_{1K_1}(t)\}$ and
$\{b_{21}(t),...,b_{2K_1}(t)\} $, we assume $t_{ij}=t_j^0$ only in this
paragraph, i.e. the 
observation times for different subject are the same.
We then estimate $b_{1k}(t_j^0)$ and $b_{2k}(t_j^0)$ by minimizing:
\begin{align}
\label{Equation-PSSE}
PSSE_{b_1, b_2}
=&\sum_{i=1}^n e_{ij}^2
   +\rho_1\|\mathbf b_1(t_j^0)\|^2
  +\rho_2\|\mathbf b_2(t_j^0)\|^2\\\nonumber
=&\|\mathbf Y_{j}-\mathbf b_1^T(t_j^0)\tilde{\mathbf X}_{1i}(t_j^0)
   +\mathbf b_2^T(t_j^0)\tilde{\mathbf X}_{2i}(t_j^0)\|^2
  +\rho_1\|\mathbf b_1(t_j^0)\|^2
  +\rho_2\|\mathbf b_2(t_j^0)\|^2\nonumber
\end{align}
where $\|\cdot\|$ is the Euclidean norm of a vector, 
$\mathbf Y_{j}=(Y_{1j}, ..., Y_{nj})^T$,
$\rho_1>0$ and $\rho_2>0$ are the regularization parameters which are assumed to be constants
for any time $t\in[0, 1]$ in order to reduce the high variability if they vary for each time.
The penalization does not only prevent over-fitting but also guarantee the inverse of matrix
while solving the minimization problem.
Then the minimizer of (\ref {Equation-PSSE}) is 
\begin{align*}
  \begin{bmatrix}
   \hat{ \mathbf b}_1(t_j^0) \\
    \hat{ \mathbf b}_2(t_j^0)
  \end{bmatrix}
  =\left(
  \mathbf Z_{j}^T\mathbf Z_{j}
  +  \begin{bmatrix}
   \rho_1I_{K_1} & 0 \\
   0 &  \rho_2I_{K_2} 
  \end{bmatrix}
  \right)^{-1}
  \left(
  \mathbf Z_{j}^T\mathbf Y_{j}
  \right)
\end{align*}
where $I_K$ is the $K\times K$ identity matrix and
\begin{align*}
\mathbf Z_{j}
=\begin{bmatrix}
   \tilde X_{111}(t_j^0)&\cdots & \tilde X_{11K_1}(t_j^0)
   &\tilde X_{211}(t_j^0)&\cdots & \tilde X_{21K_2}(t_j^0) \\
   \vdots & & \vdots&\vdots& &\vdots\\
   \tilde X_{1n1}(t_j^0)&\cdots & \tilde X_{1nK_1}(t_j^0)
   &\tilde X_{2n1}(t_j^0)&\cdots & \tilde X_{2nK_2}(t_j^0) 
  \end{bmatrix}.
  \end{align*}
  
Therefore, by using the probability limits of the covariance structure, for arbitrary $t\in[0, 1]$, we have
\begin{align}
\label{Equation-Solution}
  \begin{bmatrix}
   \hat{ \mathbf b}_1(t) \\
    \hat{ \mathbf b}_2(t)
  \end{bmatrix}
  =\left(
 \begin{bmatrix}
  \hat{\mathbf C}_{11}(t)  &  \hat{\mathbf C}_{12}(t) \\
  \hat{\mathbf C}_{21}(t)  & \hat{\mathbf C}_{22}(t)
  \end{bmatrix}
  +  \begin{bmatrix}
   \frac{\rho_1}{n}I_{K_1} & 0 \\
   0 &  \frac{\rho_2}{n}I_{K_2} 
  \end{bmatrix}
  \right)^{-1}
 \begin{bmatrix}
 \hat{\mathbf C}_{1Y}(t) \\
  \hat{\mathbf C}_{2Y}(t) 
  \end{bmatrix}
\end{align}
where 
$ \hat{\mathbf C}_{11}(t)=\left[\hat C_{\tilde{X}_{1k}, \tilde{X}_{1l}}(t)\right]_{kl}$ is a $K_1\times K_1$  matrix
with $\hat C_{\tilde{X}_{1k}, \tilde{X}_{1l}}(t)$ an estimator of 
$C_{\tilde{X}_{1k}, \tilde{X}_{1l}}(t)=cov\left(\tilde{X}_{1k}(t), \tilde{X}_{1l}(t)\right)$, 
$ \hat{\mathbf C}_{12}(t)=\left[\hat C_{\tilde{X}_{1k}, \tilde{X}_{2l}}(t)\right]_{kl}$ is a $K_1\times K_2$  matrix
with $\hat C_{\tilde{X}_{1k}, \tilde{X}_{2l}}(t)$ an estimator of 
$C_{\tilde{X}_{1k}, \tilde{X}_{2l}}(t)=cov\left(\tilde{X}_{1k}(t), \tilde{X}_{2l}(t)\right)$, 
$ \hat{\mathbf C}_{21}(t)=\left[\hat C_{\tilde{X}_{2k}, \tilde{X}_{1l}}(t)\right]_{kl}$ is a $K_2\times K_1$  matrix
with $\hat C_{\tilde{X}_{2k}, \tilde{X}_{1l}}(t)$ an estimator of 
$C_{\tilde{X}_{2k}, \tilde{X}_{1l}}(t)=cov\left(\tilde{X}_{2k}(t), \tilde{X}_{1l}(t)\right)$, 
$ \hat{\mathbf C}_{22}(t)=\left[\hat C_{\tilde{X}_{2k}, \tilde{X}_{2l}}(t)\right]_{kl}$ is a $K_2\times K_2$  matrix
with $\hat C_{\tilde{X}_{2k}, \tilde{X}_{2l}}(t)$ an estimator of 
$C_{\tilde{X}_{2k}, \tilde{X}_{2l}}(t)=cov\left(\tilde{X}_{2k}(t), \tilde{X}_{2l}(t)\right)$, 
$\hat{\mathbf C}_{1Y}(t)=\left[\hat C_{\tilde X_{11}, Y}(t), ..., \hat C_{\tilde X_{1K_1}, Y}(t)\right]^T$ 
is a vector and $\hat C_{\tilde X_{1l}, Y}(t)$ is estimator of 
$C_{\tilde X_{1l}, Y}(t)=cov\left(\tilde{X}_{1l}(t), Y(t)\right)$, and
$\hat{\mathbf C}_{2Y}(t)=\left[C_{\tilde X_{21}, Y}(t), ..., C_{\tilde X_{2K_2}, Y}(t)\right]^T$ 
is a vector and $\hat C_{\tilde X_{2l}, Y}$ is an estimator of 
$C_{\tilde X_{2l}, Y}(t)=cov\left(\tilde{X}_{2l}(t), Y(t)\right)$.

To obtain the necessary quantities in (\ref {Equation-Solution}), we consider the covariances:
\begin{itemize}
\item For $C_{\tilde{X}_{1k}, \tilde{X}_{1l}}(t)$, we have 
\begin{align*}
C_{\tilde{X}_{1k}, \tilde{X}_{1l}}(t)
&=cov\left(\tilde{X}_{1k}(t), \tilde{X}_{1l}(t)\right)\\
&=\int_{\delta_{11}}^{\delta_{12}}\int_{\delta_{11}}^{\delta_{12}}
   B_{1k}(s)B_{1l}(u)E[X_1(t-s)X_1(t-u)]duds\\
&=\int_{\delta_{11}}^{\delta_{12}}\int_{\delta_{11}}^{\delta_{12}}
   B_{1k}(s)B_{1l}(u)C_{X_1}(t-s, t-u)duds
\end{align*}
where $C_{X_1}(s, u)$ is the covariance between $X_1(s)$ and $X_1(u)$.
Since predictor $X_1$ is densely observed, $C_{X_1}(s, u)$ can be 
estimated by bivariate kernel smoothing, see Beran and Liu (2014):
\begin{align*}
\hat C_{X_1}(s, u)= 
\frac{1}{(m_{X1}b)^2}\sum_{j,k=1}^{m_{X1}}
K\left(\frac{s-s_{1j}}{b}, \frac{u-s_{1k}}{b}\right)
\frac1n\sum_{i=1}^n W_{1ij}W_{1ik}
\end{align*}
where $b$ is a bandwidth and $K$ is a bivariate kernel function.

\item For $C_{\tilde{X}_{2k}, \tilde{X}_{2l}}(t)$, we have 
\begin{align*}
C_{\tilde{X}_{2k}, \tilde{X}_{2l}}(t)
&=cov\left(\tilde{X}_{2k}(t), \tilde{X}_{2l}(t)\right)\\
&=\int_{\delta_{21}}^{\delta_{22}}\int_{\delta_{21}}^{\delta_{22}}
   B_{2k}(s)B_{2l}(u)E[X_2(t-s)X_2(t-u)]duds\\
&=\int_{\delta_{21}}^{\delta_{22}}\int_{\delta_{21}}^{\delta_{22}}
   B_{2k}(s)B_{2l}(u)C_{X_2}(t-s, t-u)duds
\end{align*}
where $C_{X_2}(s, u)$ is the covariance between $X_2(s)$ and $X_2(u)$.
Since predictor $X_2$ is sparsely observed, $C_{X_2}(s, u)$ can be 
estimated by local linear surface smoother (Yao et al. 2015) 
which is defined through minimizing
\begin{align*}
\sum_{i=1}^n \frac{1}{(m_{X2i}b)^2}\sum_{j\ne k=1}^{m_{X2i}}
K\left(\frac{s-s_{2ij}}{b}, \frac{u-s_{2ik}}{b}\right)
(W_{2ij}W_{2ik}-\alpha_0-\alpha_1(s-s_{2ij})-\alpha_2(u-s_{2ik}))^2
\end{align*}
with respect to $\alpha_0,\ \alpha_1,\ \alpha_2$,
where $b$ is a bandwidth and $K$ is a bivariate kernel function. 
And $\hat C_{X_2}(s, u)=\hat \alpha_0$.

\item For $C_{\tilde{X}_{1k}, \tilde{X}_{2l}}(t)$, we have 
\begin{align*}
C_{\tilde{X}_{1k}, \tilde{X}_{2l}}(t)
&=cov\left(\tilde{X}_{1k}(t), \tilde{X}_{2l}(t)\right)\\
&=\int_{\delta_{11}}^{\delta_{12}}\int_{\delta_{21}}^{\delta_{22}}
   B_{1k}(s)B_{2l}(u)E[X_1(t-s)X_2(t-u)]duds\\
&=\int_{\delta_{11}}^{\delta_{12}}\int_{\delta_{21}}^{\delta_{22}}
   B_{1k}(s)B_{2l}(u)C_{X_1, X_2}(t-s, t-u)duds
\end{align*}
where $C_{X_1, X_2}(s, u)$ is the covariance between $X_1(s)$ and $X_2(u)$.
Since predictor $X_1$ is densely observed and $X_2$ is sparsely observed, 
$C_{X_1}(s, u)$ can be estimated by local surface smoothing.

\item For $C_{\tilde{X}_{2k}, \tilde{X}_{1l}}(t)$, it is similar to 
$C_{\tilde{X}_{1k}, \tilde{X}_{2l}}(t)$.

\item For $C_{\tilde{X}_{1l}, Y}(t)$, we have 
\begin{align*}
C_{\tilde{X}_{1l}, Y}(t)
&=cov\left(\tilde{X}_{1l}(t), Y(t)\right)\\
&=\int_{\delta_{11}}^{\delta_{12}} B_{1l}(s)E[X_1(t-s)Y(t)]ds\\
&=\int_{\delta_{11}}^{\delta_{12}} B_{1l}(s)C_{X_1, Y}(t-s, t)ds
\end{align*}
where $C_{X_1, Y}(s, u)$ is the covariance between $X_1(s)$ and $Y(u)$.
Since $X_1$ is densely observed and $Y$ is sparsely observed, 
$C_{X_1, Y}(s, u)$ can be estimated by local linear surface smoothing.
\item For $C_{\tilde{X}_{2l}, Y}(t)$, it is similar to $C_{\tilde{X}_{1l}, Y}(t)$.
\end{itemize}

Once $\mathbf {\hat b}_1(t)$ and $\mathbf {\hat b}_2(t)$ are obtained 
(for given lags $\delta$'s and regularization 
parameters $\rho$'s), we can estimate coefficient functions by
$$\hat\beta_1(s, t)=\sum_{k=1}^{K_1}B_{1k}(s)\hat b_{1k}(t),\ s\in\Delta_1,\ t\in[0, 1]$$
and 
$$\hat\beta_2(s, t)=\sum_{k=1}^{K_2}B_{2k}(s)\hat b_{2k}(t),\ s\in\Delta_2,\ t\in[0, 1].$$
\begin{theorem}
Under assumptions in Beran and Liu (2014) and Yao et al. (2005a, 2005b), 
denote $I_t=[\max \{\delta_{12}, \delta_{22}\}, 1]$, 
$$\lim_{n\to\infty}\sup_{s, t\in\Delta_1\times I_t}|\hat\beta_1(s, t)-\beta_1(s, t)|
=0\quad {\text{in probability}}$$
$$\lim_{n\to\infty}\sup_{s, t\in\Delta_2\times I_t}|\hat\beta_2(s, t)-\beta_2(s, t)|
=0\quad {\text{in probability}}$$
\end{theorem}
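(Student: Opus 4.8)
The plan is to reduce the claim to the uniform consistency of the estimated coefficient vectors $\hat{\mathbf b}_1(t),\hat{\mathbf b}_2(t)$ over $I_t$, to identify the population limit of the estimator in (\ref{Equation-Solution}) through the normal equations of the induced varying-coefficient model (\ref{Equation-LagFLM-2}), and then to close the argument by a standard matrix-perturbation bound driven by the uniform consistency of the kernel covariance-surface estimators borrowed from Beran and Liu (2014) and Yao et al. (2005a,b). First I would reduce: since $K_1,K_2$ are fixed and each $B_{1k},B_{2k}$ is continuous, hence bounded, on the compact sets $\Delta_1,\Delta_2$,
$$\sup_{(s,t)\in\Delta_1\times I_t}|\hat\beta_1(s,t)-\beta_1(s,t)|\le\sum_{k=1}^{K_1}\Big(\sup_{\Delta_1}|B_{1k}|\Big)\sup_{t\in I_t}|\hat b_{1k}(t)-b_{1k}(t)|,$$
and similarly for $\beta_2$, so it suffices to prove $\sup_{t\in I_t}\|\hat{\mathbf b}(t)-\mathbf b(t)\|\to 0$ in probability, where $\mathbf b(t)=(\mathbf b_1^T(t),\mathbf b_2^T(t))^T$ and $\hat{\mathbf b}(t)$ is the stacked estimator of (\ref{Equation-Solution}).

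Next, writing $\mathbf C(t)$ for the $(K_1+K_2)\times(K_1+K_2)$ block matrix with blocks $\mathbf C_{ab}(t)$ and $\mathbf C_{\cdot Y}(t)=(\mathbf C_{1Y}^T(t),\mathbf C_{2Y}^T(t))^T$, I would take covariances of the centered model (\ref{Equation-LagFLM-2}) against $\tilde{\mathbf X}_1(t)$ and $\tilde{\mathbf X}_2(t)$ — using that $e$ is independent of $X_1,X_2$ — to obtain the exact normal equations $\mathbf C(t)\mathbf b(t)=\mathbf C_{\cdot Y}(t)$. Under an identifiability assumption that the induced predictors are uniformly non-degenerate, i.e. $\lambda_{\min}(\mathbf C(t))\ge c>0$ for all $t\in I_t$ (which holds, e.g., for a B-spline basis and non-degenerate $X_1,X_2$ when $\beta_1,\beta_2$ admit the exact expansions assumed), this gives $\mathbf b(t)=\mathbf C(t)^{-1}\mathbf C_{\cdot Y}(t)$ with $\sup_{t\in I_t}\|\mathbf b(t)\|<\infty$.

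Then I would establish plug-in consistency. For $t\in I_t$ the arguments $t-s$ with $s\in\Delta_1\cup\Delta_2$ stay in $[0,1]$, so every entry of $\hat{\mathbf C}(t)$ and $\hat{\mathbf C}_{\cdot Y}(t)$ is an integral over a fixed compact region of one of the surface estimators $\hat C_{X_1},\hat C_{X_2},\hat C_{X_1,X_2},\hat C_{X_1,Y},\hat C_{X_2,Y}$ against bounded basis functions; the cited assumptions deliver uniform (in probability) consistency of these surface estimators, and a crude bound such as $|\int_{\Delta_1}\int_{\Delta_1}B_{1k}B_{1l}(\hat C_{X_1}-C_{X_1})|\le\|B_{1k}\|_\infty\|B_{1l}\|_\infty|\Delta_1|^2\sup|\hat C_{X_1}-C_{X_1}|$ transfers this to $\sup_{t\in I_t}\|\hat{\mathbf C}(t)-\mathbf C(t)\|\xrightarrow{P}0$ and $\sup_{t\in I_t}\|\hat{\mathbf C}_{\cdot Y}(t)-\mathbf C_{\cdot Y}(t)\|\xrightarrow{P}0$. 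Taking the regularization so that $\rho_1/n,\rho_2/n\to0$, the penalty block is negligible, and a textbook perturbation step finishes: set $\hat{\mathbf A}(t)=\hat{\mathbf C}(t)+\mathrm{diag}(\tfrac{\rho_1}{n}I_{K_1},\tfrac{\rho_2}{n}I_{K_2})$; on the event $\{\sup_{t\in I_t}\|\hat{\mathbf A}(t)-\mathbf C(t)\|<c/2\}$ (probability $\to1$) one has $\|\hat{\mathbf A}(t)^{-1}\|\le2/c$ uniformly, the identity $\hat{\mathbf A}^{-1}-\mathbf C^{-1}=\hat{\mathbf A}^{-1}(\mathbf C-\hat{\mathbf A})\mathbf C^{-1}$ yields uniform convergence of the inverses, and then from $\hat{\mathbf b}(t)-\mathbf b(t)=(\hat{\mathbf A}(t)^{-1}-\mathbf C(t)^{-1})\hat{\mathbf C}_{\cdot Y}(t)+\mathbf C(t)^{-1}(\hat{\mathbf C}_{\cdot Y}(t)-\mathbf C_{\cdot Y}(t))$ together with $\sup_{t\in I_t}\|\mathbf C_{\cdot Y}(t)\|<\infty$ we get $\sup_{t\in I_t}\|\hat{\mathbf b}(t)-\mathbf b(t)\|\xrightarrow{P}0$, hence, via the first paragraph, the theorem.

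The hard part will be controlling the boundary: near the endpoints of $I_t$ the arguments fed to the kernel surface smoothers reach the edge of $[0,1]$ (at $t=\max\{\delta_{12},\delta_{22}\}$ the point $t-\delta_{12}$ or $t-\delta_{22}$ hits $0$, and near $t=1$ the first argument of $C_{X_1,Y}(\cdot,\cdot)$ hits $1$), where such smoothers carry boundary bias, so genuine uniform — not merely pointwise — consistency up to those endpoints is exactly what must be extracted from the assumptions of the cited papers (or bought by shrinking $I_t$ inward by a vanishing margin). The second delicate point is the uniform lower eigenvalue bound $\lambda_{\min}(\mathbf C(t))\ge c$ for $t\in I_t$, i.e. that the $K_1+K_2$ induced integrated processes stay linearly independent uniformly in $t$; this is an identifiability hypothesis on the basis/covariance pair that should be stated explicitly and is precisely what makes the matrix inversion in (\ref{Equation-Solution}) well behaved.
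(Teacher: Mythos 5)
Your proof follows essentially the same route as the paper's: invoke uniform consistency of the covariance-surface estimators from Beran and Liu (2014) and Yao et al. (2005a,b), transfer it to the block matrices $\hat{\mathbf C}_{ab}(t)$ and $\hat{\mathbf C}_{\cdot Y}(t)$, deduce uniform consistency of $\hat{\mathbf b}_1(t),\hat{\mathbf b}_2(t)$, and conclude for $\hat\beta_1,\hat\beta_2$ via the fixed, bounded basis expansion. The paper's own proof is only a three-sentence sketch of this chain, and your version is a correct, more careful rendering of the same argument that makes explicit the steps it omits (the normal equations identifying $\mathbf b(t)=\mathbf C(t)^{-1}\mathbf C_{\cdot Y}(t)$, the matrix-perturbation bound, and the conditions $\lambda_{\min}(\mathbf C(t))\ge c$ and $\rho_j/n\to 0$, which the paper never states but genuinely needs).
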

\begin{proof}
Uniform consistency of $\hat C_{X_1}(s, u)$ is given in Theorem 4 of Beran and Liu (2014),
uniform consistency of $\hat C_{X_1,X_2},\ \hat C_{X_2,X_1},\ \hat C_{X_1,Y},
\ \hat C_{X_2,Y}$ 
is given in Lemma 1 of Yao et al. (2005b),
uniform consistency of $\hat C_{X_2}(s, u)$ is given in Theorem 1 of Yao et al. (2005a).
Then the uniform consistency of $\mathbf {\hat C}_{11}(t),\ \mathbf {\hat C}_{12}(t),
\ \mathbf {\hat C}_{21}(t),\ \mathbf {\hat C}_{22}(t),\ \mathbf {\hat C}_{1Y}(t),
\ \mathbf {\hat C}_{2Y}(t)$ 
can be obtained.
Therefore the uniform consistency of $\hat {\mathbf b}_1(t)$ and $\hat {\mathbf b}_2(t)$
follows and thus that of $\hat\beta_1(s, t)$ and $\hat\beta_2(s, t)$ can be obtained.
\end{proof}

\section{Prediction}
Suppose we observe a new discrete response curve $\mathbf Y^*_j=(Y^*(t_1^*),...Y^*(t_{m^*}^*))$,  
discrete dense predictor trajectory $\mathbf W_1^*=(W_1^*(s_{11}), ..., W_1^*(s_{1m_{X1}}))^T$ 
and discrete sparse predictor trajectory $\mathbf W_2^*=(W_2^*(s_{21}^*),...W_2^*(s_{2m_{X2}^*}^*))^T$.
From the original model (\ref{Equation-LagFLM}), the predicted response curve is
\begin{align}
\label{Equation-prediction1}
E\left[Y^*(t)|X_1^*, X_2^*\right]
=\beta_0(t)+\int_{\delta_{11}}^{\delta_{12}}\beta_1(s,t)X_1^*(t-s)ds
+\int_{\delta_{21}}^{\delta_{22}}\beta_2(s,t)X_2^*(t-s)ds.
\end{align}
However, the lags $\delta_{11}, \delta_{12}, \delta_{21}, \delta_{22}$ and regularization 
parameters $\rho_1$ and $\rho_2$ have to be determined
and the functional representation of the predictor trajectories $X_1^*(s)$
and $X_2^*(s)$ have to be recovered from data.

For $X_1^*(s)$, it can be easily recovered by kernel smoothing, since the sampling 
is dense. 

However for $X_2^*(s)$, since the sampling is sparse and irregular, 
we use functional principal component analysis (FPCA).
As discussed, we assume $X_2^*(s)\sim X_2(s)\in L^2[0, 1]$
and $E[X_2(s)]=0$.
Denote the covariance of $X_2(s)$ by $C_{X_2}(s, u)=cov(X_2(s), X_2(u))$, then
the Mercer's theorem gives the following spectral decomposition of the covariance
$$C_{X_2}(s, t)=\sum_{l=1}^\infty \lambda_l\phi_l(s)\phi_l(u)$$
where $\lambda_1\geq\lambda_2\geq...\geq0$ are eigenvalues and $\phi_l$ are orthonormal eigenfunctions.
By KL expansion, $X_2^*(s)$ can be represented as
$$X_2^*(s)=\sum_{l=1}^\infty\xi_l^*\phi_l(s)$$
where $\xi_l^*=\int_0^1 X_2^*(s)\phi_l(s)ds$ are the functional principal component scores and
are uncorrelated random variables with mean 0 and variance $\lambda_l$.
In practice, $X_2^*(s)$ is often truncated by only including the first several items, i.e.
$$X_2^{*L}(s)=\sum_{l=1}^L\xi_l^*\phi_l(s).$$
The covariance $C_{X_2}(s, t)$ can be estimated as we discussed in last section
and the eigenfunctions $\phi_l$ can be estimated following the spectral decomposition
of the estimated covariance.
However the scores $\xi_l^*$ cannot be approximated by numerical integration as 
we usually do for dense functional data.
In fact, under the Gaussian assumption, denote 
$\boldsymbol\phi_l=(\phi_l(s_{21}^*),...,\phi_l(s_{2m_{X_2}^*}^*))^T$,
the best linear predictor for $\xi_l^*$ is 
(see  Mardia et al. 1978, Yao et al. 2005 or see the application in Liu et al. 2018):
$$\tilde \xi_l^*=\lambda_l\boldsymbol\phi_l^T\Sigma^{-1}\mathbf W_2^*$$
where
$\Sigma=var(\mathbf W_2^*)$.
Then the estimate of $\xi_l^*$ can be defined as
$$\hat \xi_l^*=\hat \lambda_l\hat {\boldsymbol\phi_l}^T\hat \Sigma^{-1}\mathbf W_2^*.$$

The number of eigenfunctions $L$ can be selected to be the number of eigenfunctions that
explain 95\% of the functional covariance.
Once obtaining the estimation of eigenfunctions $\phi_l$, scores  $\xi_{il}$ and $L$, 
$X_2^*(s)$ can be recovered as
$$\hat X_2^*(s)=\sum_{l=1}^L\hat \xi_l^*\hat\phi_l^*(s).$$
After plugging the functional representation of the predictor curves 
$\hat X_1^*(s)$ and $\hat X_2^*(s)$ into (\ref{Equation-prediction1}), we have 
\begin{align}
\label{Equation-prediction2}
\hat Y_L^*(t)
&=\int_{\delta_{11}}^{\delta_{12}}\hat\beta_1(s,t)\hat X_1^*(t-s)ds
+\int_{\delta_{21}}^{\delta_{22}}\hat\beta_2(s,t)\hat X_2^*(t-s)ds \nonumber\\ 
&=\int_{\delta_{11}}^{\delta_{12}}\hat\beta_1(s,t)\hat X_1^*(t-s)ds
+\int_{\delta_{21}}^{\delta_{22}}\hat\beta_2(s,t)
\sum_{l=1}^L\hat\xi_l^*\hat\phi_l(t-s)ds.
\end{align}

Define
\begin{align*}
\tilde Y^*(t)
&=\int_{\delta_{11}}^{\delta_{12}}\beta_1(s,t)X_1^*(t-s)ds
+\int_{\delta_{21}}^{\delta_{22}}\beta_2(s,t)
\sum_{l=1}^\infty\tilde\xi_l^*\phi_l(t-s)ds.
\end{align*}
and 
\begin{align*}
\tilde Y_L^*(t)
&=\int_{\delta_{11}}^{\delta_{12}}\beta_1(s,t)X_1^*(t-s)ds
+\int_{\delta_{21}}^{\delta_{22}}\beta_2(s,t)
\sum_{l=1}^L\tilde\xi_l^*\phi_l(t-s)ds.
\end{align*}

\begin{theorem}
Under assumptions in Beran and Liu (2014) and Yao et al. (2005a, 2005b), 
denote $I_t=[\max \{\delta_{12}, \delta_{22}\}, 1]$, 
for all $t\in I_t$, we have
$$\lim_{n\to\infty}\hat Y_L^*(t)=\tilde Y^*(t)\quad \text{in probabilty}.$$
\end{theorem}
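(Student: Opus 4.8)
The plan is to bound $|\hat Y_L^*(t)-\tilde Y^*(t)|$ by a triangle inequality that separates a truncation error from an estimation error,
$$|\hat Y_L^*(t) - \tilde Y^*(t)| \le |\hat Y_L^*(t) - \tilde Y_L^*(t)| + |\tilde Y_L^*(t) - \tilde Y^*(t)|,$$
where $L=L_n\to\infty$ as $n\to\infty$ at a rate compatible with the conditions of Yao et al. (2005a, 2005b). Throughout, the restriction $t\in I_t$ guarantees that $t-s$ ranges over a compact subset of $[0,1]$ for every $s\in\Delta_1$ or $s\in\Delta_2$, so that $X_1^*$, $X_2^*$, the eigenfunctions $\phi_l$, and all the kernel and surface estimators are evaluated precisely where their (uniform) consistency has been established.

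For the truncation term, write $\tilde Y_L^*(t)-\tilde Y^*(t)=-\int_{\delta_{21}}^{\delta_{22}}\beta_2(s,t)\sum_{l>L}\tilde\xi_l^*\phi_l(t-s)\,ds$. After the substitution $u=t-s$, Cauchy--Schwarz together with orthonormality of $\{\phi_l\}$ on $[0,1]$ gives
$$|\tilde Y_L^*(t)-\tilde Y^*(t)|^2 \le \Big(\sup_{s,t}|\beta_2(s,t)|^2\Big)(\delta_{22}-\delta_{21})\sum_{l>L}(\tilde\xi_l^*)^2 .$$
Since $\tilde\xi_l^*$ is the best linear predictor of $\xi_l^*$, $E[(\tilde\xi_l^*)^2]\le E[(\xi_l^*)^2]=\lambda_l$, hence $\sum_l E[(\tilde\xi_l^*)^2]\le\sum_l\lambda_l<\infty$ because $X_2\in L^2$; therefore $\sum_{l>L}(\tilde\xi_l^*)^2\to 0$ in $L^1$ and the truncation term tends to $0$ in probability as $L\to\infty$, the finite bound on $\sup|\beta_2|$ coming from continuity of $\beta_2$ on the compact set $\Delta_2\times[0,1]$.

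For the estimation term, decompose $\hat Y_L^*(t)-\tilde Y_L^*(t)$ into its $X_1$ and $X_2$ contributions and, within each, add and subtract the estimated coefficient function applied to the true curve. For the $X_1$ part, $\int\hat\beta_1(\hat X_1^*-X_1^*)+\int(\hat\beta_1-\beta_1)X_1^*$: the first integral vanishes because $\hat X_1^*$ is a kernel smoother of dense data that is uniformly consistent for $X_1^*$ under the Beran--Liu (2014) assumptions while $\hat\beta_1$ is uniformly bounded (it converges uniformly to the bounded $\beta_1$ by Theorem 1), and the second vanishes by $\sup|\hat\beta_1-\beta_1|\to 0$ from Theorem 1 together with a.s. boundedness of the path $X_1^*$ over the compact range of $t-s$. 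For the $X_2$ part, write it as $\int\hat\beta_2\big(\sum_{l\le L}\hat\xi_l^*\hat\phi_l(t-\cdot)-\sum_{l\le L}\tilde\xi_l^*\phi_l(t-\cdot)\big)+\int(\hat\beta_2-\beta_2)\sum_{l\le L}\tilde\xi_l^*\phi_l(t-\cdot)$; the second integral is $O(\sup|\hat\beta_2-\beta_2|)\cdot O_p\big((\sum_l\lambda_l)^{1/2}\big)\to 0$ by Theorem 1 and Cauchy--Schwarz, and the first is controlled by the FPCA ingredients, since $\hat C_{X_2}\to C_{X_2}$ uniformly (Yao et al. 2005a) yields, via spectral perturbation and the eigenvalue-separation assumption, $\hat\lambda_l\to\lambda_l$ and $\hat\phi_l\to\phi_l$ uniformly for each $l$ (with the usual sign normalisation) and $\hat\Sigma\to\Sigma$, whence $\hat\xi_l^*=\hat\lambda_l\hat{\boldsymbol\phi}_l^T\hat\Sigma^{-1}\mathbf W_2^*\to\tilde\xi_l^*$ in probability, so $\sum_{l\le L}\big(\hat\xi_l^*\hat\phi_l-\tilde\xi_l^*\phi_l\big)\to 0$ and, $\hat\beta_2$ being uniformly bounded, so does the integral. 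Collecting these bounds proves the claim.

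The main obstacle is exactly the first integral in the $X_2$ part once $L=L_n\to\infty$: the per-$l$ convergence rates of $\hat\phi_l$ and $\hat\xi_l^*$ deteriorate as $l$ grows because the spectral gaps $\lambda_l-\lambda_{l+1}$ shrink, so one must let $L_n$ grow slowly enough --- precisely at the rate permitted by the assumptions of Yao et al. (2005a, 2005b) --- that $\sum_{l\le L_n}\|\hat\phi_l-\phi_l\|_\infty$ and $\sum_{l\le L_n}|\hat\xi_l^*-\tilde\xi_l^*|$ still vanish in probability. That is the only place where a genuine rate condition, rather than plain consistency plus boundedness, is required; everything else is routine triangle-inequality bookkeeping.
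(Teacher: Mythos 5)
Your proposal is correct and follows essentially the same route as the paper: the same triangle-inequality split into an estimation term and a truncation term, with the estimation term handled by Theorem 1 plus kernel-smoother consistency for $X_1^*$ and the FPCA consistency results of Yao et al.\ (2005a) for $\hat\phi_l$, $\hat\lambda_l$, $\hat\xi_l^*$. The only difference is that you work out the truncation term and the need for $L=L_n\to\infty$ at a controlled rate explicitly, whereas the paper (somewhat loosely, after stating ``for fixed $L$'') delegates exactly this point to Lemma A.3 of Yao et al.\ (2005a); your version makes that step self-contained.
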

\begin{proof}
For fixed $L$, 
we have
\begin{align*}
&|\hat Y_L^*(t)-\tilde Y^*(t)|\\
&\leq |\hat Y_L^*(t)-\tilde Y_L^*(t)|+|\tilde Y_L^*(t)-\tilde Y^*(t)|\\
&\leq
\left|\int_{\delta_{11}}^{\delta_{12}}\hat\beta_1(s,t)\hat X_1^*(t-s)ds
-\int_{\delta_{11}}^{\delta_{12}}\beta_1(s,t)X_1^*(t-s)ds
\right|\\
&\quad +
\left|\int_{\delta_{21}}^{\delta_{22}}\hat\beta_2(s,t)
\sum_{l=1}^L\hat\xi_l^*\hat\phi_l(t-s)ds
-\int_{\delta_{21}}^{\delta_{22}}\beta_2(s,t)
\sum_{l=1}^L\tilde\xi_l^*\phi_l(t-s)ds
\right|\\
&\quad +
\left|\int_{\delta_{21}}^{\delta_{22}}\beta_2(s,t)
\sum_{l=1}^L\tilde\xi_l^*\phi_l(t-s)ds
-\int_{\delta_{21}}^{\delta_{22}}\beta_2(s,t)
\sum_{l=1}^\infty\tilde\xi_l^*\phi_l(t-s)ds
\right|\\
&=I_1+I_2+I_3
\end{align*}
For $I_1$, from the uniform consistency of $\hat\beta_1(s, t)$ established in Theorem 1
and the uniform consistency of kernel smoother, we have $I_1\to 0$ as $n\to\infty$.

For $I_2$, from the uniform consistency of $\hat\beta_2(s, t)$ established in Theorem 1,
the uniform consistency of $\hat\xi_l^*$ for $\tilde\xi_l^*$ from Theorem 3 in Yao et al.
(2005a), and the uniform consistency of $\hat\phi_l$ from Theorem 2 in Yao et al. (2005a),
we have $I_2\to0$ as $n\to\infty$.

For $I_3$, following Lemma A.3 in Yao et al. (2005a), we have $I_3\to 0$ as $n\to \infty$.

Therefore, Theorem 2 follows.
\end{proof}

\section{Implementation}
The final question is to estimate the time lag $\delta$'s which is of great importance in our application.
For selecting $\delta$'s and $\rho$'s, we consider the Normalized Prediction Error (NPE) criterion 
and the $K$-fold cross validation criterion.
Specifically, NPE in this situation is defined as
\begin{align}
NPE\{(\delta, \lambda)\}
=\frac{1}{N}\sum_{i=1}^n\sum_{j=1}^{m_{Yi}}\frac{\left|\hat Y_{ij}-Y_{ij}\right|}{|Y_{ij}|}
\end{align}
where $\hat Y_{ij}$ is the predicted value for the $j$th measurement on the $i$th response
trajectory $Y(t)$ obtained using $\delta$'s and $\lambda$'s, $N=\sum_{i=1}^nm_{Yi}$.
Divide the data into $K$ equal parts, for each $k=1, ..., K$, fit the model with parameter
$\delta,\ \lambda$ to the other $K-1$ parts, giving the estimation of coefficient functions,
further giving the prediction $\hat Y^{-k}_{ij}$ in the $k$th part, 
and then compute the prediction error in the $k$th part.
The $K$-fold cross validation score is defined as,
\begin{align}
CV\{(\delta, \lambda)\}
=\frac{1}{K}\sum_{k=1}^{K}\sum_{i\in k\text{th part} }\sum_{j=1}^{m_{Yi}}
\left(\hat Y^{-k}_{ij}-Y_{ij}\right)^2.
\end{align}
Similar criteria are considered in Kim et al. (2011) and Pomann et al. (2016).

Then $\delta$'s and $\rho$'s are chosen in a hierarchical manner.
Let $D_1$ and $D_2$ be the sets of potential lags for the first and second predictor, i.e.
$\{(\delta_{11}, \delta_{12})\}$ and $\{(\delta_{21}, \delta_{22})\}$, respectively.
Let $D_\rho$ be the sets of potential regularization parameters $\{(\rho_1, \rho_2)\}$.
Firstly, for a fixed point of 
$\delta^0=\left(\delta_{11}^0, \delta_{12}^0, \delta_{21}^0, \delta_{22}^0\right)\in D_1\times D_2$, 
NPE values are calculated for all $\rho\in D_\rho=\{(\rho_1, \rho_2)\}$.
Then the $\rho$ that achieves the smallest NPE value is chosen as the optimal 
$\rho$ for the given fixed point of lags $\delta^0$.
Secondly, The optimal $\rho$ is used for calculating the cross validation score for $\delta^0$.
At last, we repeat the above steps for all $\delta\in D_1\times D_2$ and the cross validation
score for all $\delta\in D_1\times D_2$ can be obtained. 
Then, the optimal $\delta$ is chosen to be the one with the smallest cross validation score.
Actually $D_1$ and $D_2$ are meshes in $[0, 1]$ and are chosen empirically, 
$D_\rho$ is also chosen empirically.

\section{Simulations}
We study efficiency of the NPE criterion for selecting the time lags $\delta$'s and 
regularization parameters $\rho$'s. 

For $n=50,\ 100,\ 150,\ 200$ subjects, we first generate the response curve $Y(t)$ and two
predictor curves $X_1(t)$ and $X_2(t)$ on a dense and equally spaced time points 
over $[0, 1]$, i.e. $\{j/99, j=0, ..., 99\}$.
The number of measurements made on the $i$th 
response $m_{Yi}$ is randomly selected from 20 to 50, 
the number of measurements made on the $i$th predictor 
$m_{Xi1}$ is 100 and the number of measurements made on the $i$th 
predictor $m_{Xi2}$ is randomly selected from 30 to 50.

Define $X_{1i}(t)=\xi_{i1}\sin(2\pi t)+\xi_{i2}t^2$ with
$\xi_{i1}\overset{iid}\sim N(0, 1)$ and $\xi_{i2}\overset{iid}\sim N(0, 1)$,
$X_{2i}(t)=\zeta_i\cos(2\pi t)$ with $\zeta_i\overset{iid}\sim N(0, 1)$.
We take the same time lags for both $X_1$ and $X_2$, 
i.e. $\delta_{11}=\delta_{21}=0.1$, $\delta_{12}=\delta_{22}=0.4$.
For coefficient functions, we take
$\beta_0(t)=t+t^{1/5}$,
$\beta_1(s, t)=\sin(2\pi s)\cos(\pi t),\ t\in[0, 1],\ s\in[0.2, 0.4]$,
$\beta_2(s, t)=\sin(4\pi s)\cos(2\pi t),\ t\in[0, 1],\ s\in[0.2, 0.4]$. 
The measurement errors are taken to be independent normal with 
signal to noise ratio 20 for the predictors and response.

Figure \ref{Data} shows the simulated data with $n=100$. 
\begin{figure}[h]
\centering
\includegraphics[width=\textwidth]{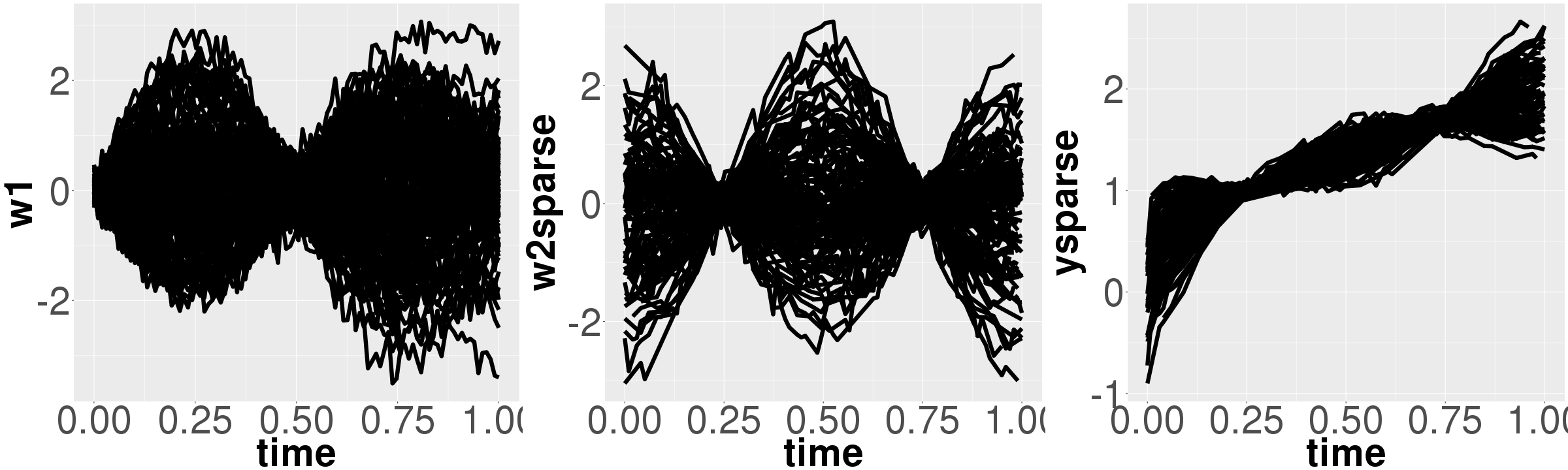} 
\caption{Simulation data: The left plot shows the discrete noisy observation of the first
predictor which is densely and regularly observed. 
The middle plot shows the discrete noisy observation of the second predictor which is
sparsely and irregularly observed.
The right plot shows the discrete noisy observation of the response which is also
sparsely and irregularly observed.}
\label{Data}
\end{figure}

The estimation is based on the B-spline 
(B-spline functions of degree 4 with 10 equally spaced interior 
knots over $[0,1]$) expansion of the coefficients. 
The number of functional principal components is chosen based on leave-one-curve 
cross validation criterion and 99\% variation is kept.
The penalized parameters  $\rho_1$ and $\rho_2$ are chosen on the dense grid 
of $\rho_1,\ \rho_2 \in [10^{-5}, 10^{-2}; 20]$.
We use NPE criterion and 10-fold cross validation criterion to determine 
the regularization parameters and the lags.
Notice that in order to check the estimation performance, the estimation procedure is
done under the correct lags, i.e. $\delta_{11}=\delta_{21}=0.1$, $\delta_{12}=\delta_{22}=0.4$.
Figure \ref{Estimate} shows the result of one simulation, where
$\rho_1$ is chosen as $4.28\times 10^{-4}$, $\rho_2$ is chosen as $8.86\times 10^{-4}$
and the corresponding NPE is $1.95\times 10^{-2}$. 
From Figure \ref{Estimate}, we conclude that our model successfully reveals the
structure of coefficient functions.

\begin{figure}[h]
\centering 
\includegraphics[width=\textwidth]{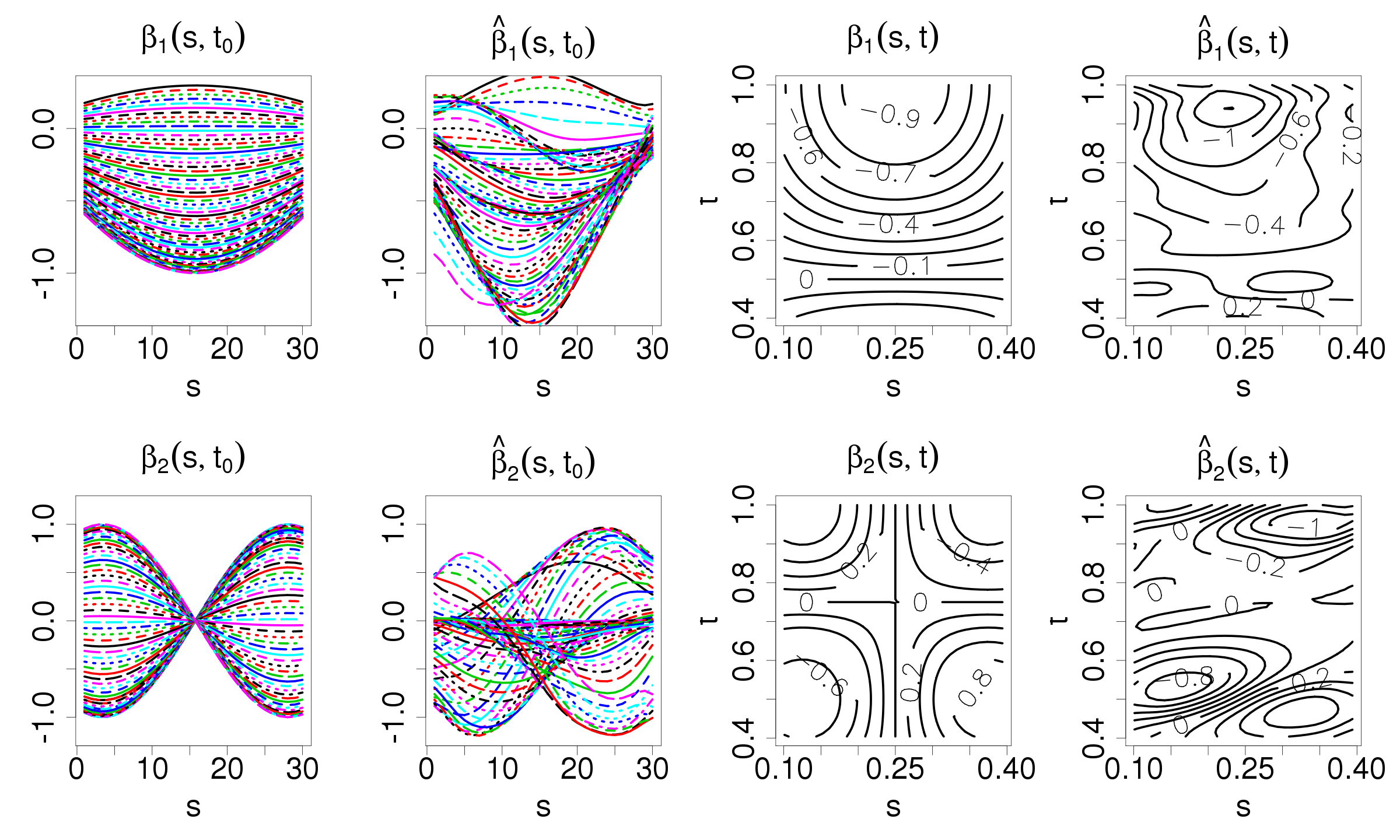}
\caption{One simulation result: The first left above plot is the true $\beta_1$:
abscissa is $s$ with the domain $[0.1, 0.4]$, 	
ordinate is the values of $\beta_1$,
there are 60 curves and they are $\beta_1(s, t_j)$ for $t_j=j/99,\ j=40,...,99$.
The second left above plot is the estimation of $\beta_1$.
The third left above plot is the contour line of the true $\beta_1$.
The fourth left above plot is the contour line of the estimated $\hat\beta_1$.
The bottom panel shows the true and estimated $\beta_2$.}
\label{Estimate}
\end{figure}

Table \ref{table1} shows the asymptotic properties of our estimation.
For different number of observations $n=50,\ 100,\ 150,\ 200$, the NPE
are shown and also the estimation is based on the correct lags.
As we can see, the NPE decrease as the $n$ increases which is correspond 
to the Theorem 1.

\begin{table}[ht]
\caption{NPEs based on correct lags}
\centering 
\begin{tabular}{|c|c c c c|}
\hline
n & 50 & 100 & 150 & 200 \\
\hline
NPE$\times 100$ & 2.08 & 1.95 & 1.86 & 1.79 \\
\hline
\end{tabular}
\label{table1}
\end{table}

For evaluating the performance of our model on selecting the effect lags,
the $\lambda$s are determined based on the NPE criterion and the $\delta$s
are determined based on 10-fold cross-validation score.
Since the true $\delta_{11}=\delta_{21}=0.1$ and $\delta_{21}=\delta_{22}=0.4$,
in order to save computational time, we fix the ending point 
i.e. $\delta_{11}=\delta_{21}=0.1$ 
and search the starting point $\delta_{21}=\delta_{22}\in\{0.3, 0.4, 0.5\}$.
That is we have three combinations but there is only one correct combination.
Our model has 65 correct choices out of 100 simulations.

\end{document}